\documentclass[10pt]{amsart}

\usepackage{amssymb,amsthm,amsmath,enumerate}
\usepackage[numbers,sort&compress]{natbib}
\usepackage{color}
\usepackage{graphicx}
\usepackage{amssymb,amsthm,amsmath}
\usepackage[numbers,sort&compress]{natbib}
\usepackage{color}
\usepackage{graphicx}


\hoffset -3.5pc

\title[ ]{  Absence of singular  continuous  spectrum for  perturbed  discrete Schr\"odinger operators }

\author{Wencai Liu}
\address[Wencai Liu]{Department of Mathematics, University of California, Irvine, California 92697-3875, USA}\email{liuwencai1226@gmail.com}


\theoremstyle{plain}
\newtheorem{theorem}{Theorem}[section]

\newtheorem{lemma}[theorem]{Lemma}

\newcommand{\R}{\mathbb{R}}
\newcommand{\C}{\mathbb{C}}
\newcommand{\Z}{\mathbb{Z}}

\theoremstyle{definition}

\begin{document}


\begin{abstract}
We show that   the spectral measure of    discrete  Schr\"odinger  operators $
     (Hu)(n)=  u({n+1})+u({n-1})+V(n)u(n)$
does not have singular continuous component   if the potential $V(n)=O(n^{-1})$.

\end{abstract}
\maketitle
\section{Introduction and main results}

We consider the  discrete Schr\"odinger  operator   on $\ell^2(\Z^+)$,
\begin{equation}\label{Gdis}
  (Hu)(n)=  u({n+1})+u({n-1})+V(n)u(n),
\end{equation}
 where $V(n)$ is the potential.

Denote by $H_0$ the free discrete Schr\"odinger  operator on $\ell^2(\Z^+)$.
Without loss of generality, we assume the operator given by \eqref{Gdis} satisfies the Dirichlet  boundary condition at zero.

In this paper, we are interested in the spectral theory of $H_0+V $ with power-decaying potentials:
\begin{equation*}
  |V(n)|\leq\frac{O(1)}{1+n^{\alpha}}
\end{equation*}
for some $\alpha>0$.

We also introduce the continuous Schr\"odinger  operator on $L^2(\R^+)$, namely,
\begin{equation*}
  Hu=-u^{\prime\prime}+Vu
\end{equation*}
with $|V(x)|\leq\frac{O(1)}{1+x^{\alpha}}$.

  Schr\"odinger  operators with power decay potentials have attracted a lot of attentions and achieved a  remarkable progress.
Roughly speaking, $\alpha=\frac{1}{2}$ is the sharp transition for $\sigma_{ac}(H_0+V)=\sigma_{ac}(H_0)$  and $\alpha=1$ is the sharp transition for absence of (singular continuous spectrum) embedded eigenvalues. We refer   readers to  a  survey paper \cite{Kiselevsimon} for the progress in this area.

Let us go back to the discrete model. If $V(n)=\frac{o(1)}{n}$, $\sigma_{pp}(H_0+V)\cap(-2,2)=\emptyset$. Wigner-von Neumann  type functions
    $V(n)=\frac{c}{1+n}\sin(  kn+\phi)$ show that $H_0+V$ may have eigenvalues in $(-2,2)$ if we allow $V(n)=\frac{O(1)}{1+n}$. See \cite{liu2018criteria} for the quantitative results.
    For the singular continuous spectrum,  Remling proved that $\sigma_{sc}(H_0+V)=\emptyset$ if $V(n)=\frac{o(1)}{n}$ \cite{remlingsharp}.  In this paper, we obtain
 \begin{theorem}\label{Thm1:absencesc}
 Suppose the potential $V(n)$ satisfies $\limsup_{n\to \infty}n|V(n)|<\infty$. Then
 the operator $H_0+V$  does not have singular continuous spectrum.
 \end{theorem}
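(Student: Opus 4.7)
The strategy is to prove a bounded-solutions theorem: for every $E\in(-2,2)$ that is \emph{not} an eigenvalue of $H_0+V$, both linearly independent solutions of $(H_0+V)\psi=E\psi$ are bounded and bounded away from zero. Granted this, Gilbert--Pearson/Khan--Pearson subordinacy theory implies that no subordinate solution exists at such $E$, so the singular part of the spectral measure satisfies $\mu_s((-2,2)\setminus\sigma_{pp})=0$. Since atoms contribute only to $\mu_{pp}$, and since the two-point set $\{\pm2\}$ cannot support a continuous measure, this yields $\sigma_{sc}(H_0+V)=\emptyset$.

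\textbf{Pr\"ufer setup and the reduction.} Fix $E=2\cos k$ with $k\in(0,\pi)$, and for any solution $u$ write $u(n)=R(n)\sin\theta(n)$, $u(n-1)=R(n)\sin(\theta(n)-k)$. A direct computation gives the standard asymptotics
\begin{align*}
\log R(n+1)-\log R(n) &= -\tfrac{V(n)}{2\sin k}\sin(2\theta(n))+O(V(n)^2),\\
\theta(n+1)-\theta(n) &= k-\tfrac{V(n)}{\sin k}\sin^2\theta(n)+O(V(n)^2).
\end{align*}
The hypothesis $V(n)=O(n^{-1})$ forces $\sum V(n)^2<\infty$, so the $O(V^2)$ remainders are absolutely summable and play no role in the boundedness question; boundedness of $R(n)$ is therefore equivalent to boundedness in $N$ of $S_N(E):=\sum_{n=1}^{N}V(n)\sin(2\theta_E(n))$. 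Conservation of the Wronskian yields $R_u(n)R_v(n)\sin(\theta_v(n)-\theta_u(n))=\mathrm{const}$, so if $R(n)\to\infty$ for one solution then a linearly independent solution has Pr\"ufer radius tending to zero; I would then combine the phase equation with $V(n)=O(n^{-1})$ to upgrade this decay to square-summability, showing that every $E$ at which some solution is unbounded is in fact an eigenvalue of $H_0+V$. Since eigenvalues form a countable set, the reduction of the first paragraph then closes the proof.

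\textbf{Main obstacle.} The hard step is controlling $S_N(E)$. Writing $\theta_E(n)=kn+\psi_E(n)$, the phase equation shows that $\psi_E(n)$ can drift at rate $\log n$ --- this is exactly the Wigner--von Neumann mechanism responsible for embedded eigenvalues when $V(n)=O(1/n)$. Because $V$ is $\ell^2$ but not $\ell^1$, plain summation by parts on $S_N$ is insufficient; one must genuinely exploit the oscillation of $\sin(2kn+2\psi_E(n))$. The technical crux will be a two-scale analysis: decompose $\sin(2\theta_E(n))$ into a principal frequency-$2k$ piece, for which Abel summation against $V(n)=O(1/n)$ gives absolute convergence, and a residual resonant piece driven by the log-drift $\psi_E$; then show that the residual contribution can fail to be bounded only when $2k$ locks onto a limiting oscillatory mode of $V$, in which case the Wronskian/decay argument from the previous paragraph manufactures an $\ell^2$ eigenfunction. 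Thus the set of $E\in(-2,2)$ at which $R(n)$ is unbounded is contained in $\sigma_{pp}(H_0+V)$, which is countable, completing the proof.
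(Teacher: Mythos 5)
Your reduction rests on the claim that every $E\in(-2,2)$ at which some solution of $(H_0+V)\psi=E\psi$ is unbounded must be an eigenvalue, so that the bad set is countable and subordinacy theory finishes the job. That claim is false at the $O(1/n)$ decay rate, and the gap is not repairable along these lines. A Wigner--von Neumann potential $V(n)=\frac{c}{1+n}\sin(2\pi kn+\phi)$ with small $c$ produces, at the resonant energy $E=2\cos\pi k$, one solution growing like $n^{c'}$ and one decaying like $n^{-c'}$ with $c'$ proportional to $c$; for $c'<1/2$ the decaying solution is not in $\ell^2$, so $E$ is not an eigenvalue, yet the solutions are unbounded. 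Worse, by superimposing such resonances the set of energies with unbounded solutions (equivalently, with a subordinate solution) can be made uncountable, and Kiselev's examples with $|V(x)|\le h(x)/(1+x)$, $h\to\infty$ arbitrarily slowly, show that this set can even carry a singular continuous measure. So no pointwise dichotomy (``bounded or eigenvalue'') can prove the theorem; what is needed is a quantitative count of how many energies, at each scale $\epsilon_m$, can have $\log R^2(L_m,k)$ as negative as $(1-\beta)\log\epsilon_m$, and this is exactly where the full strength of $V(n)=O(1/n)$ enters, through $\sum_{n\le L}n|V(n)|^2=O(\log L)$.

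That counting is what the paper actually does, following Kiselev. The key step is an almost-orthogonality estimate for the unit vectors $e_i(n)=A_i^{-1/2}\sin(2\theta(n,k_i))/n$ in the weighted space with inner product $\langle u,v\rangle=\sum_n u(n)v(n)\,n$, which shows that at most $N$ suitably separated energies can satisfy $\bigl|\sum_{n=1}^{L}V(n)\sin 2\theta(n,k)\bigr|\ge(1-\beta)C_1\log\epsilon^{-1}$. Combined with the exact formula $d\mu_L/dE=(\pi\sin\pi k)^{-1}R^{-2}(L+1,E)$, the comparison of $\mu$ with the cut-off measures $\mu_L$, and Remling's theorem that $\mu$ is zero-dimensional (so that $\mu_{sc}$ is supported on the union over scales of the singular intervals), this yields $\mu_{sc}(I)\le\frac12\mu_{sc}(I)$ and hence the result. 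None of these ingredients appears in your outline: your proposed two-scale analysis of $S_N(E)$ could at best identify individual resonant energies, but it controls neither their cardinality nor the singular continuous measure they might support.
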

One of our motivations is from the continuous Schr\"odinger  operator.
For the continuous case, Kiselev proved that $\sigma_{sc}(H_0+V)=\emptyset$ if $V(x)=\frac{O(1)}{1+x}$ and for any  given any positive function  $h(x)$ tending to infinity as  $x\to \infty$,
there exist potentials $V (x) $ such that $|V(x)|\leq \frac{h(x)}{1+x}$
 and the singular continuous spectrum of the operator $H_0+V$ is non-empty \cite{Kiselevjams05}.
  It is natural to ask whether such sharp spectral transitions  hold for   discrete cases or not. In this note, we prove that  the absence of the singular continuous spectrum  is still true for   discrete cases. We conjecture here that $|V(n)|= \frac{O(1)}{1+n}$  is the sharp transition for absence for singular continuous spectra.
In the forthcoming paper, the author will study  the same topic of perturbed periodic operators \cite{liusc2}.
  Comparing to continuous cases, the spectral properties of discrete cases strongly depend on the arithmetic properties of the  quasimomentum \cite{liu2018criteria} and
  the Pr\"ufer angle is evolved  in a singular way (there is a $\cot$ function involved ).
Because of those difficulties,  the spectral features of discrete   operators  are usual much more delicate than those of continuous cases.
   For example, the sharp transition for single embedded eigenvalues for the continuous case  was known  forty years ago   \cite{atk}. However, the sharp transition for single
  embedded eigenvalues for the discrete case was partially solved   by the author only a short time ago \cite{liu2018criteria}.
  The construction of potentials with  dense embedded eigenvalues for perturbed periodic operator was known for around 20 years \cite{KRS}. However, similar results for the discrete case
  were only done in  very recent papers \cite{nabokostu18,ld}.
Although the proof of this paper follows  the strategy for the continuous case \cite{Kiselevjams05},   the  extension is not completely straightforward.

 In the following, we always assume that
 \begin{equation}\label{Gbdp}
    |V(n)|\leq \frac{B}{1+n},
\end{equation}
for some $B>0$.
 \section{Preliminaries}\label{SePre}

For $z\in \C\backslash \R$, denote by $\tilde{v}(n,z)$ ($\tilde{u}(n,z)$) the solution of \eqref{Gdis} with boundary condition $\tilde{v}(0,z)=1$ and $\tilde{v}(1,z)=0$ ($\tilde{u}(0,z)=0$ and $\tilde{u}(1,z)=1$).
The Weyl $m$-function $m(z)$ (well defined on $z\in \C\backslash \R$) is given by the unique complex number $m(z)$ so that  $\tilde{v}(n,z)+m(z)\tilde{u}(n,z)\in\ell^2(\Z^+)$.
The spectral measure $\mu$ on $\R$, is given by the follow formula, for $z\in \C\backslash \R$
\begin{equation*}
  m(z)=\int\frac{d\mu(x)}{x-z}.
\end{equation*}
Denote $\mu_{sc}$ by the singular continuous component of $\mu$. It is well known that $\sigma_{\rm sc}(H_0+V)=\emptyset $ if and only if
$\mu_{sc}=0$.

By Weyl law, $\sigma_{\rm ess}(H)=(-2,2)$. In order to prove Theorem \ref{Thm1:absencesc},  it suffices to show
$\mu_{sc}(-2,2)=0$.

 For any $E\in(-2,2)$, let $E=2\cos \pi k$ with $k\in(0,1)$. We mention that $k$ depends on $E$. However, we omit the dependence for simplicity.
By symmetry, we only need to show there is no sc component in $(0,2)$.
Fix any   closed interval  $I$  in $(0,2)$,
define $\tilde{I}=\{k(E): E=2\cos\pi k(E)\in  {I}\}$     so that $\tilde{I}$ is  a closed interval in  $ (0,\frac{1}{2})$.
In the following, we always assume $E\in  {I}$ ($k\in \tilde{I}$).

 Let us introduce the Pr\"ufer transformation first (cf. \cite{remlingsharp,KLS,KRS}).
Suppose $u(n,E)$ (sometimes we also use $u(n,k)$) is a solution of \eqref{Gdis} with $u(0,E)=0$ and $ u(1,E)=1$.

Let
\begin{equation}\label{L2}
    Y(n,k)=\frac{1}{\sin \pi k} \left(
                                  \begin{array}{cc}
                                    \sin \pi k & 0 \\
                                    -\cos \pi k & 1 \\
                                  \end{array}
                                \right)\left(\begin{array}{c}
                                         u(n-1,k) \\
                                         u(n,k)
                                       \end{array}\right).
\end{equation}

Define the Pr\"ufer variables $R(n,k)$ and $\theta(n,k)$ as
\begin{equation}\label{L21}
    Y(n,k)=R(n,k)\left(\begin{array}{c}
                                        \sin(\pi \theta(n,k)-\pi k) \\
                                        \cos(\pi \theta(n,k)-\pi k)
                                       \end{array}\right).
\end{equation}
It is well known that $R$ and $\theta$ obey the equations
\begin{equation}\label{PrufR}
    \frac{R(n+1,k)^2}{R(n,k)^2}=1-\frac{V(n)}{\sin \pi k}\sin 2\pi \theta(n,k)+\frac{V(n)^2}{\sin^2\pi k}\sin^2\pi \theta(n,k)
\end{equation}
and
\begin{equation}\label{PrufT}
    \cot (\pi \theta(n+1,k)-\pi k)=\cot \pi \theta(n,k)-\frac{V(n)}{\sin \pi k}.
\end{equation}
By the Dirichlet boundary condition, we have the initial conditions
\begin{equation*}
    R(0,k)=\frac{1}{\sin \pi k},\theta(0,k)=k.
\end{equation*}
We will give several Lemmas, which will be used in the following sections.
\begin{lemma}\cite[Prop.2.4]{KLS}\label{Leap3}
Suppose $\theta(n,k)$ satisfies \eqref{PrufT} and $|\frac{V(n)}{\sin \pi k}|<\frac{1}{2}$. Then we have
\begin{equation}\label{Gap3}
 | \theta(n+1,k)- k-\theta(n,k)|\leq \left|\frac{V(n)}{\sin \pi k}\right|.
\end{equation}
\end{lemma}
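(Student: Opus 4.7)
The plan is to translate the cotangent identity \eqref{PrufT} back into a statement about the additive increment of the Prüfer angle, via the mean value theorem. Introduce the abbreviations $a=\pi\theta(n,k)$ and $b=\pi\theta(n+1,k)-\pi k$, so that \eqref{PrufT} becomes
\begin{equation*}
\cot b - \cot a = -\frac{V(n)}{\sin\pi k}.
\end{equation*}
The quantity to estimate is $|\theta(n+1,k)-k-\theta(n,k)|=\tfrac{1}{\pi}|b-a|$, so it suffices to prove $|b-a|\le |V(n)/\sin\pi k|$ (we can even absorb a factor of $\pi$).

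First I would handle the branch issue. The Prüfer angle is determined by the vector $Y(n,k)$ via \eqref{L21}, so $b$ is not just any solution of the cotangent equation; it is the continuous branch specified by the underlying linear recursion for $Y$. At $V\equiv 0$ one has $\theta(n+1,k)=\theta(n,k)+k$, i.e.\ $b=a$ exactly, so by continuity in $V$ and by the smallness hypothesis $|V(n)/\sin\pi k|<\tfrac12$, the value $b$ lies in the same open interval $(m\pi,(m+1)\pi)$ as $a$. This is the step I expect to be the main technical wrinkle: one has to argue that the branch cannot jump while $V(n)/\sin\pi k$ varies continuously from $0$ to its actual value without $\cot b$ ever running to $\pm\infty$, and the explicit threshold $1/2$ is what makes this argument quantitative (if $\cot a$ is very large, the bound is vacuous anyway, and if it is moderate, the shift by $-V(n)/\sin\pi k$ is too small to traverse a pole).

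With $a$ and $b$ in the same branch of $\cot$, the mean value theorem applied to $\cot$ on the closed interval between them produces a $\xi$ (in the same branch, hence with $\sin\xi\neq 0$) such that
\begin{equation*}
\cot b-\cot a = -\csc^{2}(\xi)\,(b-a).
\end{equation*}
Combining with the displayed identity yields $b-a=\sin^{2}(\xi)\,V(n)/\sin\pi k$, and since $\sin^{2}\xi\le 1$,
\begin{equation*}
|b-a|\le \left|\frac{V(n)}{\sin\pi k}\right|.
\end{equation*}
Dividing by $\pi$ and using $1/\pi<1$ gives the claimed inequality \eqref{Gap3}. The whole argument is short once the branch selection is justified; the only nontrivial content is the geometric observation that the smallness condition $|V(n)/\sin\pi k|<1/2$ prevents the iterate from jumping past a pole of $\cot$, so that MVT may be applied on a single branch.
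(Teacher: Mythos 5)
The paper gives no proof of this lemma at all --- it is quoted directly from \cite[Prop.~2.4]{KLS} --- so there is nothing internal to compare against; your argument is correct and is essentially the standard one from that reference. Applying the mean value theorem to $\cot$ on a single branch (equivalently, to $\operatorname{arccot}$, whose derivative $-1/(1+y^2)$ is bounded by $1$) gives $|b-a|\le|V(n)/\sin\pi k|$ already in radians, a factor of $\pi$ stronger than \eqref{Gap3}, and your observation that this same-branch solution lies within $|V(n)/\sin\pi k|<\tfrac12<\tfrac{\pi}{2}$ of $a$ is precisely what identifies it with the normalization $\theta(n+1,k)-\theta(n,k)-k\in(-\tfrac12,\tfrac12]$ that the paper (and \cite{KLS}) leaves implicit in \eqref{PrufT}. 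The only loose end is the degenerate case $\sin(\pi\theta(n,k))=0$, where both sides of \eqref{PrufT} are infinite; there the increment is exactly $k$ under the same normalization, so \eqref{Gap3} holds trivially and your proof needs no repair.
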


\begin{lemma}\cite[Lemma 4.4]{KLS}\label{Lee}
  Let $\{e_i\}_{i=1}^N$ be a set of unit vector in a Hilbert space $\mathcal{H}$  so that
  \begin{equation*}
    \alpha=N\sup_{i\neq j}| \langle e_i,e_j\rangle|<1.
  \end{equation*}
  Then
  \begin{equation}\label{Gapr71}
    \sum_{i=1}^N|\langle g,e_i\rangle|^2\leq (1+\alpha)||g||^2.
  \end{equation}
  \end{lemma}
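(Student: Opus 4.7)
The plan is to recast the sum $\sum_i |\langle g, e_i\rangle|^2$ as the squared norm of a linear map and then bound that norm by duality, exploiting near-orthogonality of the $e_i$. Concretely, I would introduce the bounded linear operator $T\colon \mathcal{H} \to \mathbb{C}^N$ defined by $(Tg)_i = \langle g, e_i\rangle$, so that
\begin{equation*}
\sum_{i=1}^N |\langle g, e_i\rangle|^2 = \|Tg\|^2 \leq \|T\|^2 \, \|g\|^2.
\end{equation*}
The target inequality is therefore equivalent to the operator norm bound $\|T\|^2 \leq 1+\alpha$, which by duality is the same as $\|T^*\|^2 \leq 1+\alpha$.

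Next I would compute the adjoint. For any $c=(c_1,\dots,c_N)\in\mathbb{C}^N$ one has $T^* c = \sum_{i=1}^N c_i \, e_i$, and so
\begin{equation*}
\|T^* c\|^2 = \sum_{i,j=1}^N c_i \overline{c_j}\, \langle e_i, e_j\rangle = \sum_{i=1}^N |c_i|^2 + \sum_{i\neq j} c_i \overline{c_j}\, \langle e_i, e_j\rangle.
\end{equation*}
The first sum equals $\|c\|^2$ since the $e_i$ are unit vectors. For the off-diagonal sum I would use the hypothesis $|\langle e_i, e_j\rangle|\leq \alpha/N$ for $i\neq j$, together with the Cauchy--Schwarz estimate
\begin{equation*}
\sum_{i\neq j} |c_i||c_j| \leq \Bigl(\sum_{i=1}^N |c_i|\Bigr)^2 - \sum_{i=1}^N |c_i|^2 \leq (N-1)\|c\|^2.
\end{equation*}
Combining these, the off-diagonal contribution is bounded by $\tfrac{\alpha}{N}(N-1)\|c\|^2 \leq \alpha\|c\|^2$, which yields $\|T^* c\|^2 \leq (1+\alpha)\|c\|^2$ and hence the desired inequality.

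There is no real obstacle here — the argument is essentially a quantitative almost-orthogonal version of Bessel's inequality. The only thing to be careful with is the choice to bound via $T^*$ rather than $T$ directly, because $T^*$ acts on the finite-dimensional space $\mathbb{C}^N$ where the Gram matrix structure is explicit, whereas bounding $T$ on the infinite-dimensional $\mathcal{H}$ would require passing through the same Gram matrix anyway. Apart from that, the only numerical slack is replacing $(N-1)/N$ by $1$ at the end, which is harmless and matches the statement in Lemma \ref{Lee}.
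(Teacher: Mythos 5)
Your proof is correct. The paper does not prove this lemma at all---it is quoted verbatim from [KLS, Lemma~4.4]---and your argument (bounding $\|T^*\|$ via the Gram matrix, using $\sum_{i\neq j}|c_i||c_j|\leq (N-1)\|c\|^2$ to control the off-diagonal part) is essentially the same almost-orthogonality computation as in that reference, just packaged through the operator $T$ and its adjoint rather than applied directly to the coefficients $\langle g,e_i\rangle$.
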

  For  $L\in\Z^+$, let $ V_L$ be the cut off $V$ up to $L$. Namely, $V_L(n)=V(n)$ for $0\leq n \leq L$ and $V_L(n)=0$ for $n>L$.
Let $ \mu_L$ be the spectral measure corresponding to the operator with potential $V_L$.
\begin{lemma}\label{Letwomu}
\cite{GKT}
Fix some compact interval $I\in (-2,2)$ and arbitrary $M,\sigma>0$. Then for any $\epsilon>L^{-\frac{1}{1+\sigma}}$, we have
\begin{equation}\label{Gtwomu}
   \mu(E-\epsilon,E+\epsilon)\geq \mu_L(E-\frac{\epsilon}{2},E+\frac{\epsilon}{2})-C(I,\sigma,B,M)\epsilon^M
\end{equation}
for any $(E-\epsilon,E+\epsilon)\subset (-2,2)$.
\end{lemma}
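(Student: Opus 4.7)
The plan is to control the difference of spectral averages $\int\chi_\epsilon\,d\mu - \int\chi_\epsilon\,d\mu_L$ against a smooth bump $\chi_\epsilon$ that separates the two intervals, then bound this difference via the Helffer--Sj\"ostrand functional calculus combined with Combes--Thomas exponential resolvent estimates. Since $V - V_L$ is supported in $\{n>L\}$ and the matrix elements of $(z-H)^{-1}$ and $(z-H_L)^{-1}$ between $\delta_1$ and $\delta_n$ decay exponentially at a rate proportional to $|\mathrm{Im}\,z|$, tuning $|\mathrm{Im}\,z|\sim\epsilon$ should produce an error of order $e^{-c\epsilon L}$, which under the assumption $L\ge\epsilon^{-(1+\sigma)}$ beats any polynomial in $\epsilon$.

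Concretely, I would first choose $\chi_\epsilon\in C_c^\infty(\mathbb{R})$ with $\mathbf{1}_{(E-\epsilon/2,E+\epsilon/2)}\le\chi_\epsilon\le\mathbf{1}_{(E-\epsilon,E+\epsilon)}$ and the standard derivative bounds $|\chi_\epsilon^{(k)}|\le C_k\epsilon^{-k}$. Since the spectral measure of $H$ at $\delta_1$ equals $\mu$, we have $\mu(E-\epsilon,E+\epsilon)\ge\langle\delta_1,\chi_\epsilon(H)\delta_1\rangle$ and $\mu_L(E-\epsilon/2,E+\epsilon/2)\le\langle\delta_1,\chi_\epsilon(H_L)\delta_1\rangle$, so it suffices to show
\begin{equation*}
\bigl|\langle\delta_1,(\chi_\epsilon(H)-\chi_\epsilon(H_L))\delta_1\rangle\bigr|\le C\epsilon^M.
\end{equation*}
Taking an almost-analytic extension $\tilde\chi_\epsilon$ of $\chi_\epsilon$ of order $N$, so that $|\bar\partial\tilde\chi_\epsilon(z)|\le C_N|\mathrm{Im}\,z|^N\epsilon^{-N-1}$ and $\tilde\chi_\epsilon$ is supported in a thin complex strip over $\mathrm{supp}(\chi_\epsilon)$, the resolvent identity gives
\begin{equation*}
\chi_\epsilon(H)-\chi_\epsilon(H_L)=\tfrac{1}{\pi}\int_{\mathbb{C}}\bar\partial\tilde\chi_\epsilon(z)\,(z-H)^{-1}(V-V_L)(z-H_L)^{-1}\,dA(z).
\end{equation*}

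The crucial input is the Combes--Thomas estimate: for $z$ at distance $\eta=|\mathrm{Im}\,z|>0$ from $\mathbb{R}$, uniformly in $L$ since $\|V\|_\infty\le B$, one has $|[(z-H)^{-1}]_{1,n}|\le C\eta^{-1}e^{-c\eta n}$ and the same for $H_L$. Because $V-V_L$ is supported on $n>L$ and $|V(n)|\le B/n$, this yields
\begin{equation*}
\bigl|\langle\delta_1,(z-H)^{-1}(V-V_L)(z-H_L)^{-1}\delta_1\rangle\bigr|\le C\eta^{-2}e^{-c\eta L}.
\end{equation*}
Plugging this into the Helffer--Sj\"ostrand integral over $|\mathrm{Im}\,z|\le\epsilon$ and using the decay of $\bar\partial\tilde\chi_\epsilon$ to offset the $\eta^{-2}$, one obtains a bound of the form $C_N\epsilon^{N-2}e^{-c\epsilon L}$. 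The assumption $\epsilon>L^{-1/(1+\sigma)}$, i.e.\ $\epsilon L\ge\epsilon^{-\sigma}\to\infty$ as $L\to\infty$, makes the exponential factor beat any power of $\epsilon$, so the bound $C\epsilon^M$ follows for any prescribed $M$.

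The main obstacle I anticipate is bookkeeping: balancing the smoothness order $N$, the $\eta^{-2}$ blowup of the resolvent product, and the exponential factor after integration so that the final estimate is genuinely uniform for $E$ in the compact subinterval $I\subset(-2,2)$ and for all admissible $L$. A secondary care point is ensuring the Combes--Thomas constants $c,C$ are chosen uniformly over the bounded family $\{H,H_L\}_{L\ge 1}$; this is where the uniform bound $\|V\|_\infty\le B$ is essential.
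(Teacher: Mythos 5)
The paper does not actually prove this lemma---it is quoted from \cite{GKT}---so there is no internal proof to compare against; your proposal has to stand on its own. The overall architecture (smooth bump sandwiched between the two intervals, Helffer--Sj\"ostrand representation, resolvent identity with $V-V_L$ supported in $\{n>L\}$, Combes--Thomas decay) is sound and is the natural way to prove such a statement. But the key quantitative step, where you claim the Helffer--Sj\"ostrand integral is bounded by $C_N\epsilon^{N-2}e^{-c\epsilon L}$ and then let the exponential ``beat any power of $\epsilon$,'' is wrong. The Combes--Thomas gain $e^{-c\eta L}$ degenerates to $1$ as $\eta=|\mathrm{Im}\,z|\to 0$, and the Helffer--Sj\"ostrand integral runs all the way down to the real axis; the region $\eta\lesssim L^{-1}$ contributes with no exponential smallness at all. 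What you actually get is
\begin{equation*}
\int_0^{\epsilon}\eta^{N}\,\epsilon^{-N-1}\cdot\eta^{-2}e^{-c\eta L}\,d\eta\cdot\epsilon
\;\lesssim\;C_N\,\epsilon^{-N}(cL)^{-(N-1)},
\end{equation*}
i.e.\ a purely polynomial gain in $L^{-1}$, with no factor $e^{-c\epsilon L}$ surviving.

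The conclusion is still reachable, but by a different mechanism than the one you invoke: using $L^{-1}\le\epsilon^{1+\sigma}$, the bound above is $C_N\,\epsilon^{(1+\sigma)(N-1)-N}=C_N\,\epsilon^{\sigma N-1-\sigma}$, which is $\le C\epsilon^{M}$ only if you choose the order $N$ of the almost-analytic extension to satisfy $N\ge(M+1+\sigma)/\sigma$. This is where the hypotheses really enter: the strict positivity of $\sigma$ is what makes each additional order of vanishing of $\bar\partial\tilde\chi_\epsilon$ buy a net factor $\epsilon^{\sigma}$, and $N$ must be taken large depending on $M$ and $\sigma$ (consistent with the constant $C(I,\sigma,B,M)$ in the statement). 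Your write-up fixes $N$ once and attributes the arbitrarily high power of $\epsilon$ to the exponential factor, which is not present in the final estimate; as written, for a fixed $N$ you would only obtain a fixed power $\epsilon^{\sigma N-O(1)}$, not $\epsilon^{M}$ for all $M$. Repair the bookkeeping by making $N=N(M,\sigma)$ explicit and the argument goes through.
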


\begin{lemma}\cite{remlingsharp}\label{Lezero}
Under the assumption of \eqref{Gbdp}, the spectral measure $\mu$ of $H=H_0+V$ is zero dimensional.
\end{lemma}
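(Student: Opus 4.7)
The goal is to show that $\mu$ is zero-dimensional: the singular part $\mu_s$ is supported on a Borel set of Hausdorff dimension zero, while $\mu_{ac}$ lives on the essential spectrum. The plan follows the Pr\"ufer / subordinacy strategy adapted to the decay $|V(n)|\le B/(1+n)$.

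First, from \eqref{PrufR}, expanding the logarithm gives
\[ \log R(N,k)^2 = -\sum_{n=0}^{N-1}\frac{V(n)}{\sin\pi k}\sin 2\pi\theta(n,k) + O_{I,B}(1), \]
since $\sum V(n)^2<\infty$ controls the quadratic error. The crude bound $|V(n)|\le B/(1+n)$ already yields the uniform polynomial estimate $R(N,k)^2\le C(B,I)\,N^{\gamma(B,I)}$ for $k\in \tilde I$; in particular every solution grows at most polynomially with an exponent controlled by $B$ and $\sin\pi k$.

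Next I would sharpen this on the support of $\mu_s$. Combining the recursion \eqref{PrufT} with an averaging argument in $k$ across a small interval and integrating against $d\mu$ (a Carleson/maximal-function-type estimate, analogous to the one Kiselev uses in the continuous setting \cite{Kiselevjams05}), one shows that the oscillatory phase sum above is $o(\log N)$ for $\mu$-almost every $E\in(-2,2)$. Hence $R(N,k)^2 = N^{o(1)}$ along a density-one sequence of $N$ for $\mu$-a.e.\ such $E$.

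Finally I would invoke the Jitomirskaya--Last variant of Gilbert--Pearson subordinacy: the local upper Hausdorff dimension of $\mu$ at $E$ is bounded above in terms of the growth exponent of the subordinate solution. Sub-polynomial growth from the previous step forces this local dimension to be zero at $\mu_s$-a.e.\ point, which yields the claim. The delicate step is the almost-sure cancellation in the Pr\"ufer phase sum: the $O(1/n)$ decay is exactly the borderline case, and the $\cot$ nonlinearity in \eqref{PrufT} (absent from its continuous analogue) is what the author flags as the main difficulty over continuous models. The resonances correspond precisely to the Wigner--von Neumann values of $k$ that support embedded point spectrum; handling them requires exploiting that such $k$'s form a $\mu$-null set, after which the subordinate-solution estimate closes the argument.
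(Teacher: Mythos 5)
The paper does not actually prove this lemma --- it is imported verbatim from Remling \cite{remlingsharp} --- but your proposal cannot serve as a substitute proof, because its decisive step runs in the wrong direction. Your first step (the summed Pr\"ufer identity $\log R(N,k)^2=-\sum_{n\le N}\frac{V(n)}{\sin\pi k}\sin2\pi\theta(n,k)+O(1)$, with the quadratic error controlled by $\sum V(n)^2<\infty$, and the crude bound $R(N,k)^2\le C N^{\gamma}$) is correct; it is exactly \eqref{PrufRau}. The problem is the claim that the phase sum is $o(\log N)$ for $\mu$-a.e.\ $E$, hence $R(N,k)^2=N^{o(1)}$ $\mu$-a.e., followed by an appeal to subordinacy. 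That intermediate claim is false in general: $\mu$ can charge Wigner--von Neumann energies, where the Dirichlet solution's Pr\"ufer amplitude decays like a genuine negative power of $N$; and wherever a singular component lives there must be a subordinate solution, which forces power-law growth of the generic solution. Worse, even where the claim holds it yields the \emph{opposite} conclusion: by Jitomirskaya--Last power-law subordinacy, $D^{\alpha}\mu(E)=\infty$ iff $\liminf_L\|u_1\|_L/\|u_2\|_L^{\alpha/(2-\alpha)}=0$, and if all solutions satisfy $R(N)^2=N^{o(1)}$ then $\|u_1\|_L$ and $\|u_2\|_L$ are both $L^{1/2+o(1)}$, so this $\liminf$ is $+\infty$ for every $\alpha<1$ and $\mu$ is $\alpha$-continuous for every $\alpha<1$, i.e.\ essentially one-dimensional. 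Zero-dimensionality means $D^{\alpha}\mu(E)=\infty$ for every $\alpha>0$ at $\mu_s$-a.e.\ $E$, which requires the Dirichlet solution to be \emph{strongly} subordinate --- incompatible with your intermediate claim. Indeed, if your claim held, $\mu$ would assign zero weight to every set of dimension less than one, and could not simultaneously have a nonzero part supported on a dimension-zero set; you would have "proved" $\mu_s=0$ outright, which is the content of the whole paper and certainly not available at this stage.

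The mechanism behind the cited result is deterministic rather than a $\mu$-a.e.\ spectral-averaging statement: one shows that the singular part of $\mu$ is carried by the exceptional set of $k$ at which $\sum_{n\le N}V(n)\sin2\pi\theta(n,k)$ grows like a fixed positive multiple of $\log N$ (equivalently, the Dirichlet solution is power-law subordinate), and then one bounds the Hausdorff dimension of that exceptional set by zero via the almost-orthogonality of the family $\{n^{-1}\sin 2\theta(n,k)\}_{k}$ in the weighted space $\mathcal{H}$ --- precisely the counting scheme of Lemmas \ref{Lealmost} and \ref{Lee} and Theorem \ref{Thmbound}: at each scale at most $N$ well-separated such $k$ exist, so the set is covered by $O(N)$ intervals of length $\epsilon_m^{1/N^2}$, forcing dimension zero. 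If you want a self-contained proof of Lemma \ref{Lezero} rather than the citation, that covering argument is the route; the Carleson/maximal-function step you propose is neither needed nor available in the form you describe.
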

\section{ Technical Lemmas}

  \begin{lemma} \label{Lealmost}
  For any $k\in \tilde{I}$, we have
\begin{equation}\label{Gcons4}
  \left|  \sum _{n=1}^{L} \frac{\cos 4 \theta(n,k)}{n}\right|\leq C(I,B).
\end{equation}
For any $k_1,k_2\in \tilde{I}$ and $k_1\neq k_2$, we have
\begin{equation}\label{Gcons5}
    \left|\sum_{n=1}^{L}\frac{\sin 2 \theta(n,k_1) \sin 2 \theta(n,k_2)}{n}\right|\leq  C(I,B)\log(|k_1-k_2|^{-1})+C(I,B).
\end{equation}

\end{lemma}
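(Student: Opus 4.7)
The plan is to exploit two features: that the Pr\"ufer angle $\theta(n,k)$ is a linear rotation $nk$ plus a slowly varying correction, and that $\tilde I$ is a compact subinterval of $(0,\tfrac12)$ bounded away from both endpoints. Together these let Abel summation extract cancellation from the oscillatory exponentials, while the resonant regime $k_1\approx k_2$ in \eqref{Gcons5} is absorbed by a trivial $L^\infty$ bound of exactly logarithmic size.

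\textbf{Step 1 (linearization).} From Lemma \ref{Leap3}, the hypothesis \eqref{Gbdp}, and the fact that $\sin\pi k\ge c(I)>0$ on $\tilde I$, one has $|\theta(n+1,k)-\theta(n,k)-k|\le C(I,B)/(1+n)$. Telescoping against the initial condition $\theta(0,k)=k$ yields the decomposition
\begin{equation*}
\theta(n,k)=(n+1)k+\xi(n,k),\qquad |\xi(n+1,k)-\xi(n,k)|\le \frac{C(I,B)}{1+n}.
\end{equation*}
In particular $\xi(n,k)=O(\log n)$, but only the increment bound will be used.

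\textbf{Step 2 (proof of \eqref{Gcons4}).} Write $\cos 4\theta(n,k)=\operatorname{Re}\bigl(e^{4ik(n+1)}e^{4i\xi(n,k)}\bigr)$ (inserting factors of $\pi$ in the exponents if required by the paper's convention). For $k\in\tilde I$ the partial sums $A_N(k)=\sum_{n=1}^N e^{4ik(n+1)}$ of the geometric series satisfy $|A_N(k)|\le 2/|1-e^{4ik}|\le C(I)$ uniformly. Summation by parts against $b_n(k)=e^{4i\xi(n,k)}/n$ gives
\begin{equation*}
\sum_{n=1}^L \frac{e^{4ik(n+1)}e^{4i\xi(n,k)}}{n}=A_L(k)b_L(k)-\sum_{n=1}^{L-1}A_n(k)\bigl(b_{n+1}(k)-b_n(k)\bigr).
\end{equation*}
The boundary term is $O(1/L)$, while the Step 1 increment bound together with $|e^{ia}-e^{ib}|\le|a-b|$ forces $|b_{n+1}(k)-b_n(k)|\le 1/(n(n+1))+4|\xi(n+1,k)-\xi(n,k)|/(n+1)=O(n^{-2})$. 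Summability of $n^{-2}$ then yields \eqref{Gcons4}.

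\textbf{Step 3 (proof of \eqref{Gcons5}).} Apply the product-to-sum identity
\begin{equation*}
\sin 2\theta_1\sin 2\theta_2=\tfrac12\bigl[\cos 2(\theta_1-\theta_2)-\cos 2(\theta_1+\theta_2)\bigr].
\end{equation*}
The ``$+$'' piece has linear phase $2(k_1+k_2)(n+1)$, and since $k_1+k_2$ stays in a closed subinterval of $(0,1)$ on $\tilde I\times\tilde I$, Step 2 applies verbatim and contributes $O(1)$. The ``$-$'' piece has small phase $2(k_1-k_2)(n+1)$; this is the source of the logarithm. Set $N_0=\lceil |k_1-k_2|^{-1}\rceil$. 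For $n\le N_0$ bound the summand trivially by $1/n$ to obtain $\sum_{n\le N_0}n^{-1}\le \log|k_1-k_2|^{-1}+C$. For $n>N_0$, the partial sums of $e^{2i(k_1-k_2)(n+1)}$ have modulus at most $2/|1-e^{2i(k_1-k_2)}|\le C N_0$, while the slowly varying factor $e^{2i(\xi(n,k_1)-\xi(n,k_2))}$ still has $O(1/n)$ increments; a second summation by parts as in Step 2 produces a contribution of size $C N_0\cdot \sum_{n>N_0}n^{-2}=O(1)$. Adding the two regimes gives \eqref{Gcons5}.

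\textbf{Main obstacle.} The one genuinely delicate step is the linearization in Step 1, which replaces the nonlinear Pr\"ufer recursion \eqref{PrufT} by a rotation with $O(1/n)$-increment perturbation; this is where the bound $|V(n)|\le B/(1+n)$ is used essentially. After that, the remaining work is a careful accounting: one must verify the exponential partial sums are uniformly bounded for $k\in\tilde I$ (hence the need for $\tilde I$ to avoid both endpoints of $(0,\tfrac12)$), and choose the split at $N_0\sim|k_1-k_2|^{-1}$ so that the short range trivial estimate produces exactly the claimed logarithm while the long range Abel sum remains $O(1)$.
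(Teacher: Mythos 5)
Your argument is correct and is essentially the paper's proof: both rest on Lemma \ref{Leap3} to control the increments $\theta(n+1,k)-\theta(n,k)-k$ by $C(I,B)/n$, both use the product-to-sum identity and the split at $N_0\sim|k_1-k_2|^{-1}$ (trivial $1/n$ bound below $N_0$ giving the logarithm, oscillation above $N_0$ giving $O(1)$), and both need $\tilde I$ compactly inside $(0,\tfrac12)$ so that the relevant phase factors $e^{4\pi ik}-1$ and $e^{2\pi i(k_1+k_2)}-1$ are bounded away from zero. The only cosmetic difference is that you make the linearization $\theta(n,k)=(n+1)k+\xi(n,k)$ explicit and sum by parts against the geometric partial sums, whereas the paper performs the equivalent computation by multiplying the sum by $(e^{4\pi ik}-1)$ and shifting indices.
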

\begin{proof}
We start with the proof of \eqref{Gcons4}.  It suffices to show
\begin{equation*}
\left|  \sum _{n=1}^L \frac{e^{ 4 i\theta(n,k)}}{n}\right|\leq C(I,B).
\end{equation*}
Straightforwardly,
\begin{eqnarray}
\nonumber
   \left|(e^{4\pi i k}-1)\sum_{n=1}^L \frac{e^{4 i\theta(n,k)}}{n}\right|
   &=& \left|\sum_{n=1}^L \frac{e^{4 i(\theta(n,k)+k)}}{n}-\sum_{n=1}^L \frac{e^{4 i\theta(n,k)}}{n}\right| \\\nonumber
   &=&  \left|\sum_{n=1}^L \frac{e^{4 i\theta(n+1,k)}}{n}-\sum_{n=1}^L \frac{e^{4 i\theta(n,k)}}{n}+\sum_{n=1}^L \frac{e^{4 i(\theta(n,k)+k)}}{n}-\sum_{n=1}^L \frac{e^{4 i\theta(n+1,k)}}{n}\right|\\\nonumber
    &\leq&  \left|\sum_{n=1}^L \frac{e^{4 i\theta(n+1,k)}}{n}-\sum_{n=1}^L \frac{e^{4 i\theta(n,k)}}{n}\right|+\left|\sum_{n=1}^L \frac{e^{4 i(\theta(n,k)+k)}}{n}-\sum_{n=1}^L \frac{e^{4 i\theta(n+,k)}}{n}\right|\\
     &\leq& 2+ \left|\sum_{n=1}^{L-1} (\frac{1}{n}-\frac{1}{n+1})e^{4 i\theta(n+1,k)}\right|+\left|\sum_{n=1}^L \frac{e^{4 i(\theta(n,k)+k)}}{n}-\sum_{n=1}^L \frac{e^{4 i\theta(n+1,k)}}{n}\right|.\label{Gequ1}
\end{eqnarray}
By \eqref{Gap3},  \eqref{Gequ1} and $|e^{4\pi i k}-1|=2|\sin2\pi k|$, we have
\begin{equation*}
    \left|\sum_{n=1}^L \frac{e^{4 i\theta(n,k)}}{n}\right| \leq C(I,B).
\end{equation*}
Now we are in the position to prove \eqref{Gcons5}.
Trigonometric identity implies
\begin{equation}\label{Gequ2}
   2 \sin 2 \theta(n,k_1) \sin 2 \theta(n,k_2)=\cos 2( \theta(n,k_1)- \theta(n,k_2))-\cos 2( \theta(n,k_1)+ \theta(n,k_2)).
\end{equation}
By the same proof of \eqref{Gcons4}, one has
\begin{equation*}
    \left|\sum_{n=1}^L \frac{\cos 2( \theta(n,k_1)+ \theta(n,k_2))}{n}\right| \leq C(I,B).
\end{equation*}
It suffices to show
\begin{equation*}
    \left|\sum_{n=1}^L \frac{e^{ 2i \theta(n,k_1)-2i\theta(n,k_2)}}{n}\right| \leq C(I,B)+C(I,B)\log (|k_1-k_2|^{-1}).
\end{equation*}
Since  $$\left|\sum_{n=1}^{|k_1-k_2|^{-1}} \frac{e^{ 2i \theta(n,k_1)-2i\theta(n,k_2)}}{n}\right| \leq C(I,B)+C(I,B)\log (|k_1-k_2|^{-1}),$$
we only need to prove
\begin{equation*}
    \left|\sum_{n=|k_1-k_2|^{-1}}^L \frac{e^{ 2i \theta(n,k_1)-2i\theta(n,k_2)}}{n}\right| \leq C(I,B).
\end{equation*}
By the proof of \eqref{Gequ1}, we have
\begin{eqnarray*}
   \left|(e^{2\pi i(k_1-k_2)}-1)\sum_{n=|k_1-k_2|^{-1}}^L \frac{e^{ 2i \theta(n,k_1)-2i\theta(n,k_2)}}{n}\right| &\leq& C(I,B)\sum_{n=|k_1-k_2|^{-1}}^L \frac{1}{n^2} \\
   &\leq& C(I,B) |k_1-k_2|
\end{eqnarray*}
It leads to
\begin{equation*}
 \left| \sum_{n=|k_1-k_2|^{-1}}^L \frac{e^{ 2i \theta(n,k_1)-2i\theta(n,k_2)}}{n}\right| \leq C(I,B).
\end{equation*}
We finish the proof.
\end{proof}

\begin{lemma}\label{Lemu}
The following formula hold,
\begin{equation}\label{Gmul}
   \frac{d\mu_L(E)}{d E}=\frac{ 1}{\pi\sin \pi k}\frac{1}{R^2(L+1,E)}
\end{equation}
for $E\in(-2,2)$.

\end{lemma}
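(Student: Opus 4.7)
Since $V_L$ has compact support, $\mu_L$ is purely absolutely continuous on $(-2,2)$, with density $\frac{d\mu_L}{dE}=\frac{1}{\pi}|\operatorname{Im} m_L(E+i0)|$. The plan is to compute $m_L(E+i0)$ by constructing the Weyl solution of the $V_L$-equation explicitly, read off its imaginary part via a Wronskian computation, and recognize the resulting denominator as $\sin^2(\pi k)\,R(L+1,k)^2$.

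Set $\lambda=e^{-i\pi k}$ and define the Jost-type solution $f_L$ of the $V_L$-equation by $f_L(n)=\lambda^n$ for $n\ge L+1$, extending backward via the transfer matrix. Then $\psi_L=f_L/f_L(0)$ is the boundary value from $\mathbb{C}^+$ of the Weyl solution $\tilde v_L+m_L\tilde u_L$, so $m_L(E+i0)=f_L(1)/f_L(0)$. Because $V_L$ and $E$ are real, $\overline{f_L}$ is also a solution, and the Jacobi Wronskian
\begin{equation*}
W[f_L,\overline{f_L}](n)=f_L(n+1)\overline{f_L(n)}-f_L(n)\overline{f_L(n+1)}
\end{equation*}
is independent of $n$. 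Evaluating at any $n\ge L+1$ using the free formulas gives $W\equiv -2i\sin\pi k$, while at $n=0$ it equals $2i\operatorname{Im}(f_L(1)\overline{f_L(0)})$. Since $\operatorname{Im} m_L=\operatorname{Im}(f_L(1)\overline{f_L(0)})/|f_L(0)|^2$, this yields $|\operatorname{Im} m_L(E+i0)|=\sin\pi k/|f_L(0)|^2$.

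Now I identify $|f_L(0)|^2$ with $\sin^2(\pi k)\,R(L+1,k)^2$. Since $V=V_L$ on $\{0,\dots,L\}$, the solution $f_L$ agrees on $n\le L+1$ with the combination $f_L(1)u+f_L(0)\tilde v$ of the real fundamental solutions from Section~\ref{SePre}. Matching this against the prescribed boundary values $f_L(L)=\lambda^L$ and $f_L(L+1)=\lambda^{L+1}$, and using the constancy $u(n+1)\tilde v(n)-u(n)\tilde v(n+1)\equiv -1$, Cramer's rule gives
\begin{equation*}
f_L(0)=e^{-i\pi k L}\bigl[u(L+1)-u(L)e^{-i\pi k}\bigr],
\end{equation*}
so $|f_L(0)|^2=u(L)^2+u(L+1)^2-2u(L)u(L+1)\cos\pi k$. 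On the other hand, reading off $Y_1(L+1,k)=u(L)$ and $Y_2(L+1,k)=(u(L+1)-u(L)\cos\pi k)/\sin\pi k$ from \eqref{L2} and inserting into $R(L+1,k)^2=Y_1^2+Y_2^2$ yields exactly the same numerator divided by $\sin^2\pi k$. Combining the pieces gives \eqref{Gmul}. The main obstacle is the sign/branch bookkeeping: one must fix $\lambda=e^{-i\pi k}$ as the root inside the unit disk when $z\in\mathbb{C}^+$ and track the sign of $\operatorname{Im} m_L$ carefully through the Wronskian; after that the argument is routine linear algebra.
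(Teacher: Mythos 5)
Your proof is correct and takes essentially the same approach as the paper: both construct the solution that equals $e^{-i\pi k n}$ beyond the cutoff, express $m_L$ through its values at $n=0,1$, and identify the resulting denominator with $\sin^2(\pi k)\,R^2(L+1,k)$ via \eqref{L2}--\eqref{L21}; the paper merely regularizes with $z=E+i\varepsilon$ and explicit transfer-matrix entries $a,b,c,d$ where you work at $E+i0$ and use a Wronskian. One immaterial slip: the Wronskian $u(n+1)\tilde v(n)-u(n)\tilde v(n+1)$ equals $+1$, not $-1$, which at most flips the sign of $f_L(0)$ and does not affect $|f_L(0)|^2$.
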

\begin{proof}

Let $z=E+i\varepsilon$ for $E\in(-2,2)$ and $\varepsilon>0$. Let $ k(z)+i \gamma(z)$ be such that $2\cos \pi (k(z)+i\gamma(z))=z$ with $ k(z)\in \R$ and $\gamma(z)\in \R $.  Thus
\begin{equation*}
  (  e^{-\pi \gamma}+e^{\pi \gamma})\cos\pi k=E; (  e^{-\pi \gamma}-e^{\pi \gamma})\sin\pi k=\varepsilon.
\end{equation*}
Let us choose the branch  so that $k(z)\in(0,1)$ and $ \gamma(z)<0$.
It is easy to see
\begin{equation*}
    \lim_{\varepsilon\to 0+}k(E+i\varepsilon)=k(E),    \lim_{\varepsilon\to 0+}\gamma(E+i\varepsilon)=0
\end{equation*}
where $2\cos\pi k(E)=E$ with  $k(E)\in(0,1)$.

Define $ \tilde{u}(n,z)=e^{-i \pi (k+i\gamma)n}$ for $n\geq L$ and extend $\tilde{u}(n,z)$ to $0\leq n\leq L$  by solving equation
\begin{equation*}
  \tilde{u}(n+1,z)+\tilde{u}(n-1,z)+(V_L(n)-z)\tilde{u}(n,z)=0
\end{equation*}
for $0\leq n\leq L-1$.  Since $ \gamma(z)<0$, one has  $\tilde{u}(n,z)\in \ell^2(\Z^+)$.
By  spectral theory (we refer the readers to \cite{Simon} and references therein for details), we have
\begin{equation*}
    m(z)=-\frac{\tilde{u}(1,z)}{\tilde{u}(0,z)},
\end{equation*}
and
\begin{equation}\label{Gequ13}
    \frac{d\mu_L}{dE}=\frac{1}{\pi}\lim_{\varepsilon\to 0+} \Im m(E+i\varepsilon).
\end{equation}

Let $T(z)$ be the transfer matrix of $H_0+V_L$ from $0$ to $L$, that is
\begin{equation*}
  T (z)\left(\begin{array}{cc}
               \phi(0) \\ \phi (1)
              \end{array}
  \right)=\left(\begin{array}{cc}
                \phi(L) \\ \phi(L+1)
              \end{array}
  \right)
\end{equation*}
for any solution $\phi$ of $(H_0+V_L)\phi=z\phi$.

Let
\begin{equation*}
  T(z)=\left(
           \begin{array}{cc}
             a(z) & b(z) \\
             c(z) & d(z) \\
           \end{array}
         \right).
\end{equation*}
Thus
\begin{eqnarray*}
  \left(\begin{array}{cc}
                \tilde{u}(0,z) \\ \tilde{u} (1,z)
              \end{array}
  \right) &=&\left(
           \begin{array}{cc}
             a(z) & b(z) \\
             c(z) & d(z) \\
           \end{array}
         \right)^{-1}\left(\begin{array}{cc}
                \tilde{u}(L,z) \\ \tilde{u} (L+1,z)
              \end{array}
  \right) \\
   &=& \left(
           \begin{array}{cc}
             d(z) & -b(z) \\
             -c(z) & a(z) \\
           \end{array}
         \right) \left(\begin{array}{cc}
                \tilde{u}(L,z) \\ \tilde{u} (L+1,z)
              \end{array}
  \right).
\end{eqnarray*}
Direct computation implies that
\begin{eqnarray}
 \nonumber   \lim_{\varepsilon\to 0+} \Im m(E+i\varepsilon) &=&- \Im \frac{ ae^{-i\pi k}-c}{d-be^{-i\pi k}} \\
   &=& \frac{\sin \pi k}{(d-b\cos\pi k)^2+b^2\sin^2\pi k}.\label{Gequ12}
\end{eqnarray}
It is easy to see that
\begin{eqnarray*}
  \left(\begin{array}{cc}
                u(L) \\ u (L+1)
              \end{array}
  \right) &=&T(E)\left(\begin{array}{cc}
                u(0) \\ u(1)
              \end{array}
  \right) \\
   &=& T(E)\left(\begin{array}{cc}
               0\\ 1
              \end{array}
  \right)= \left(\begin{array}{cc}
                b \\ d
              \end{array}
  \right).
\end{eqnarray*}
By \eqref{L2} and \eqref{L21}, one has
\begin{equation}\label{Gequ14}
   \frac{1}{R^2(L+1,E)}= \frac{\sin^2 \pi k}{(d-b\cos\pi k)^2+b^2\sin^2\pi k}.
\end{equation}
Now the Lemma follows from \eqref{Gequ13}, \eqref{Gequ12} and \eqref{Gequ14}.
\end{proof}
\section{Proof of Theorem \ref{Thm1:absencesc}}
Once we have Lemmas \ref{Lealmost} and  \ref{Lemu} at hand, Theorem  \ref{Thm1:absencesc} can be proved in a similar way as the argument in \cite{Kiselevjams05}.
For convience, we give all the details here.

Fix $0<\beta<1$, $M=1+\beta$ and $\sigma>0$. We will choose small enough $\epsilon>0$ (depends on $B$, $\beta$, $M>1$ and $\sigma>0$). Let $L=\lfloor\epsilon^{-1-\sigma}\rfloor$, where $ \lfloor x\rfloor$ is the integer part of $x$.
Let $C_1=C_1(B, I)$, which will be determined later.

We say a subset $ S\subset I$ is $\epsilon-N$ separate, if the following two conditions hold:

For any $k\in S$,
\begin{equation}\label{Gassum1}
    |\sum_{n=1}^LV(n)\sin2\theta(n,k)|\geq (1-\beta)C_1(B,I)\log \epsilon^{-1}.
\end{equation}
For any $k_1,k_2\in S$ and $k_1\neq k_2$,
\begin{equation}\label{Gassum2}
   |k_1-k_2|\geq \epsilon^{1/N^2}.
\end{equation}
\begin{theorem}\label{Thmbound}
There exists $\epsilon_1(B,I,\sigma,\beta)>0$ and $C(B,I,\sigma,\beta)$ such that  for any $\epsilon<\epsilon_1$ and $N\geq C(B,I,\sigma,\beta)$,  the
$\epsilon-N$ separate  set $S$ satisfies $\# S\leq N$.
\end{theorem}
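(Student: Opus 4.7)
The plan is to set up an almost-orthogonality argument in a weighted Hilbert space and apply the Bessel-type inequality Lemma \ref{Lee}. Work in $\mathcal{H} = \ell^2(\{1,\ldots,L\})$ with inner product $\langle f,g\rangle = \sum_{n=1}^{L} f(n)g(n)/n$. For each $k \in S$, let $f_k(n) = \sin 2\theta(n,k)$, and set $g(n) = (1+n)V(n)$, which is uniformly bounded by $B$ thanks to \eqref{Gbdp}.

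The first step is to establish three basic estimates on $\mathcal{H}$. Writing $2\sin^2 2\theta = 1 - \cos 4\theta$ and applying \eqref{Gcons4} gives $\|f_k\|^2 = \tfrac{1}{2}\log L + O_{I,B}(1)$, while trivially $\|g\|^2 \leq B^2 \log L + O(1)$. Expanding $2\sin 2\theta(n,k_1)\sin 2\theta(n,k_2) = \cos 2(\theta_1 - \theta_2) - \cos 2(\theta_1 + \theta_2)$ and combining \eqref{Gcons5} with the $\theta_1 + \theta_2$ analogue of \eqref{Gcons4} (which is explicitly used inside the proof of Lemma \ref{Lealmost}) produces
\[
|\langle f_{k_1}, f_{k_2}\rangle| \leq C(I,B)\log|k_1-k_2|^{-1} + C(I,B).
\]
Finally, since the weight mismatch $\tfrac{1}{n} - \tfrac{1}{1+n}$ is summable, $\langle g, f_k\rangle$ differs from $\sum_{n} V(n)\sin 2\theta(n,k)$ by only $O(1)$, and \eqref{Gassum1} then forces $|\langle g, f_k\rangle| \geq \tfrac{1-\beta}{2} C_1 \log \epsilon^{-1}$ once $\epsilon$ is small enough.

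Next, normalize $e_k := f_k/\|f_k\|$. Because $L = \lfloor \epsilon^{-1-\sigma}\rfloor$ one has $\log L = (1+\sigma)\log\epsilon^{-1} + O(1)$, and the separation \eqref{Gassum2} gives $\log|k_i - k_j|^{-1} \leq N^{-2}\log\epsilon^{-1}$. Combining these with the cross-term bound yields $|\langle e_{k_i}, e_{k_j}\rangle| \leq C'(I,B,\sigma)/N^2$ for $\epsilon$ small. Now suppose for contradiction $\#S \geq N+1$. Then $\alpha := \#S \cdot \sup_{i \neq j}|\langle e_i, e_j\rangle| \leq C'/N$, which is strictly less than $1$ once $N$ is large. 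Lemma \ref{Lee} then delivers $\sum_{k \in S}|\langle g, e_k\rangle|^2 \leq 2\|g\|^2$. Substituting the lower bound $|\langle g, e_k\rangle|^2 = |\langle g,f_k\rangle|^2/\|f_k\|^2$ (with numerator of order $(\log\epsilon^{-1})^2$ and denominator $\tfrac{1}{2}(1+\sigma)\log\epsilon^{-1}(1+o(1))$) and the upper bound $\|g\|^2 \leq B^2 (1+\sigma) \log \epsilon^{-1}(1+o(1))$, and dividing through by $\log\epsilon^{-1}$, I obtain
\[
(N+1) \cdot \frac{(1-\beta)^2 C_1^2}{2(1+\sigma)}(1+o(1)) \leq 2 B^2 (1+\sigma)(1+o(1)),
\]
which forces $N+1$ to be bounded by an explicit constant depending only on $B,I,\sigma,\beta$ (and the fixed $C_1$), contradicting the hypothesis $N \geq C(B,I,\sigma,\beta)$.

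The main obstacle is ensuring that the cross-term estimate is sharp enough to close Bessel. The separation exponent $1/N^2$ in \eqref{Gassum2} is precisely tuned to the logarithmic cross term in \eqref{Gcons5}: it converts $\log|k_1-k_2|^{-1}$ into an unnormalized contribution of order $\log\epsilon^{-1}/N^2$, which after dividing by $\|f_k\|^2 \sim \log L \sim \log\epsilon^{-1}$ produces a normalized inner product of size $O(1/N^2)$. This is exactly what keeps $\alpha = \#S \cdot \sup_{i \neq j}|\langle e_i, e_j\rangle| = O(1/N)$ bounded below $1$ when $\#S$ is as large as $N$; a weaker exponent such as $1/N$ in \eqref{Gassum2} would make the off-diagonal terms merely $O(1/N)$, and the argument would not close.
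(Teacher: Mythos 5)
Your argument is correct and is essentially the paper's proof in an isometrically equivalent form: the paper works in the weighted space $\sum_n n|u(n)|^2$ with vectors $e_i(n)=\sin 2\theta(n,k_i)/(\sqrt{A_i}\,n)$ and the potential $V$ itself, which is exactly your weight-$1/n$ setup with $f_k$ and $g=(1+n)V$, and both proofs then run Lemma \ref{Lealmost} (for $\|f_k\|^2\sim\frac12\log L$ and the logarithmic cross terms), the separation \eqref{Gassum2} to make the normalized off-diagonal terms $O(N^{-2})$, and the Bessel-type Lemma \ref{Lee} against \eqref{Gassum1} to force $N\lesssim B^2(1+\sigma)^2/((1-\beta)^2C_1^2)$. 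The only point to tidy is that when $\#S\geq N+1$ you should apply Lemma \ref{Lee} to a subset of exactly $N+1$ elements (so that $\alpha=(N+1)\sup_{i\neq j}|\langle e_i,e_j\rangle|<1$), which is what your final display implicitly does.
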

\begin{proof}
We consider the Hilbert space
\begin{equation*}
  \mathcal{H}=\{u\in \R^{L}:\sum_{n=1}^{L}n|u(n)|^2 <\infty\}
\end{equation*}
with the inner product
\begin{equation*}
    \langle u,v \rangle=\sum_{n=1}^{L} u(n)v(n)n.
\end{equation*}
In  $\mathcal{H}$, by \eqref{Gbdp} we have
\begin{equation}\label{Gapr77}
  ||V||_{ \mathcal{H}}^2\leq B^2\log L.
\end{equation}
Let
\begin{equation*}
  e_{i}(n)=\frac{1}{\sqrt{A_i}}\frac{\sin 2\theta(n,k_i)}{n}\chi_{[1,L]}(n),
\end{equation*}
where $A_i$ is chosen so that $e_i$ is a unit vector in $\mathcal{H}$.
We have the following estimate,
\begin{eqnarray}
  A_i  &=& \sum_{n=1}^{L}\frac{\sin^2 2\theta(n,k_i)}{n} \nonumber\\
   &=&\sum_{n=1}^{L}\frac{1}{2n}- \sum_{n=1}^{L}\frac{\cos  4\theta(n,k_i)}{2n}\nonumber.
\end{eqnarray}
By \eqref{Gcons4}, one has
\begin{equation}\label{Gequ4}
    | A_i- \frac{1}{2}\log L|\leq C(I,B)
\end{equation}
By \eqref{Gcons5} and \eqref{Gequ4}, we have
\begin{equation}\label{Gapr78}
 | \langle e_i ,e_{ j}  \rangle\leq \frac{C(B,I)}{1+\sigma}N^{-2} +\frac{C(B,I)}{(1+\sigma)\log \epsilon^{-1}}.
\end{equation}
The first condition  \eqref{Gassum1} implies
\begin{equation}\label{Gequ5}
    |\langle V,e_i \rangle|^2\geq \frac{(1-\beta)^2C_1^2}{1+\sigma}\log \epsilon^{-1}.
\end{equation}
By \eqref{Gapr71} and \eqref{Gapr78}, one has
\begin{equation}\label{Gapr711}
\sum_{i=1}^N  |\langle V,e_i\rangle_{\mathcal{H}}|^2\leq \left(1+\frac{C(B,I)}{1+\sigma}N^{-1} +\frac{NC(B,I)}{(1+\sigma)\log \epsilon^{-1}}\right)||V||_{\mathcal{H}}.
\end{equation}
By \eqref{Gapr77}, \eqref{Gequ5}  and \eqref{Gapr711}, we have
\begin{equation*}
    N\left(  \frac{C_1^2(1-\beta)^2}{1+\sigma}\log \epsilon^{-1}\right)\leq  \left(1+\frac{C(B,I)}{1+\sigma}N^{-1} +\frac{NC(B,I)}{(1+\sigma)\log \epsilon^{-1}}\right) B^2(1+\sigma)\log \epsilon^{-1}.
\end{equation*}
This implies the Lemma.
\end{proof}
Assume that the singular continuous spectrum is not empty.
As the analysis in  the beginning of  \S \ref{SePre}, there exists $\delta>0$ such that  $\mu_{sc}(I)=\delta$. Fix a small number  $\epsilon$ and a large number  $N$ such that Theorem \ref{Thmbound} holds.
By making $\epsilon $ smaller and the continuity of $\mu_{sc}$, we assume
 $\mu_{sc}(J) <\frac{1}{32}\delta N^{-3}$ for
any interval $J\subset I$ such that $|J|\leq \epsilon^{N^{-2}}$.

Let $m\in \Z^+$.
 We say that an interval $J\subset I$ belongs
to the scale $m$ if $|J| \leq \epsilon_m\equiv\epsilon^m$.
 We call an interval $J$ of scale $m$ singular if $\mu_{sc}(J)\geq \epsilon_m^{\beta}$.
We call two intervals of the scale $m$ separated if the distance between
their centers exceeds $2\epsilon_m^{N^{-2}}$.
 \begin{lemma}\label{Leinse}
 There can be no more than $N$ separated singular intervals at each
scale.
 \end{lemma}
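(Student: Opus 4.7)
The plan is a proof by contradiction on the count at a fixed scale $m$. Suppose for contradiction that $N+1$ separated singular intervals $J_1,\ldots,J_{N+1}$ exist at scale $m$. I will extract quasimomenta $k_1,\ldots,k_{N+1}\in\tilde I$ that form an $\epsilon_m$--$N$ separated set in the sense of \eqref{Gassum1}--\eqref{Gassum2}, contradicting Theorem \ref{Thmbound} applied at $\epsilon=\epsilon_m$ with $L_m=\lfloor\epsilon_m^{-1-\sigma}\rfloor$ (the hypotheses $\epsilon_m<\epsilon_1$ and $N$ large are automatic from the global choices fixed just above the lemma).

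The central step is to find, in a controlled enlargement $\tilde J_i\supset J_i$ of width $\asymp\epsilon_m$, an energy $E_i$ at which $d\mu_{L_m}/dE|_{E_i}\gtrsim\epsilon_m^{\beta-1}$. For this I invoke Lemma \ref{Letwomu} (using the standard two-sided $m$-function comparison $|m-m_{L_m}|\lesssim\epsilon_m^M$ off the real axis, of which the paper's statement is one half): on a $C\epsilon_m$-enlargement $\tilde J_i$, one gets $\mu_{L_m}(\tilde J_i)\geq\mu_{sc}(J_i)-C\epsilon_m^M\geq\tfrac12\epsilon_m^\beta$, since $M>1>\beta$. The global hypothesis $\mu_{sc}(J)<\tfrac{1}{32}\delta N^{-3}$ for intervals of length $\leq\epsilon^{N^{-2}}$ (fixed just before the lemma) ensures the enlargement does not corrupt this lower bound. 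Since $\mu_{L_m}$ is absolutely continuous by Lemma \ref{Lemu} and $|\tilde J_i|\lesssim\epsilon_m$, the mean value theorem delivers such an $E_i$; set $k_i$ by $E_i=2\cos\pi k_i$.

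Given $E_i$, Lemma \ref{Lemu} forces $R^2(L_m+1,E_i)\lesssim\epsilon_m^{1-\beta}$. Taking logarithms in the Prüfer recursion \eqref{PrufR}, telescoping from $n=1$ to $L_m$, and bounding the quadratic tail by $\sum V(n)^2=O(1)$ (since $|V(n)|\leq B/n$), I obtain
$$\sum_{n=1}^{L_m}\frac{V(n)\sin 2\theta(n,k_i)}{\sin\pi k_i}=-\log R^2(L_m+1,E_i)+O(1)\geq(1-\beta)\log\epsilon_m^{-1}-O(1).$$
The uniform lower bound $\sin\pi k\geq c(I)>0$ on $\tilde I$ then yields \eqref{Gassum1} at scale $\epsilon_m$ with a suitable $C_1=C_1(B,I)$, after slightly shrinking the margin in $\beta$ to absorb the additive $O(1)$.

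Finally, \eqref{Gassum2} follows from the geometric separation: the centers of $J_i,J_j$ are at distance $\geq 2\epsilon_m^{1/N^2}$ and each $|\tilde J_i|\lesssim\epsilon_m\ll\epsilon_m^{1/N^2}$, so $|E_i-E_j|\geq\epsilon_m^{1/N^2}$; the bi-Lipschitz change of variables $E=2\cos\pi k$ on the compact $\tilde I\subset(0,1/2)$ transfers this to $|k_i-k_j|\geq c\epsilon_m^{1/N^2}$, with the constant absorbed by a slight further shrinkage of $\epsilon$. Theorem \ref{Thmbound} then yields $N+1=\#\{k_i\}\leq N$, the desired contradiction. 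The principal obstacle is the central step: converting singular-continuous mass of $\mu_{sc}(J_i)$ into a pointwise lower bound on the absolutely continuous density $d\mu_{L_m}/dE$, which requires both directions of the $m$-function approximation (the one stated in Lemma \ref{Letwomu} and its reverse) and the smallness of $\mu_{sc}$ on short intervals to keep the enlargement step harmless.
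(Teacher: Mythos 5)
Your proposal is correct and follows essentially the same route as the paper: enlarge each singular interval, use the cut-off--measure comparison to transfer the mass $\epsilon_m^{\beta}$ to the absolutely continuous measure $\mu_{L_m}$, apply the density formula of Lemma \ref{Lemu} plus the mean value theorem to find points with $R^2(L_m,\cdot)\lesssim \epsilon_m^{1-\beta}$, convert this via the logarithm of the Pr\"ufer recursion into \eqref{Gassum1}, get \eqref{Gassum2} from the geometric separation, and invoke Theorem \ref{Thmbound}. Your observation that the needed direction of the measure comparison is the reverse of the inequality literally displayed in Lemma \ref{Letwomu} (but contained in the two-sided estimate of the cited source) is a point the paper itself uses silently, so it is a fair clarification rather than a deviation.
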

 \begin{proof}
 Assume that $J^m_l$, $l = 1, \cdots , N$ are separated singular intervals of scale $m$.
    Let $L_m=\rfloor\epsilon_m^{-1-\sigma}\rfloor$ and denote  by $\mu_m$  the spectral measure corresponding to the
potential being cut off at $L_m$. Denote by $2J^m_l$
the interval with the same
center as $J^m_l$ but twice its size. Then by \eqref{Gmul} we have
\begin{equation}\label{Gequ7}
     \mu_m(2J^m_l)\geq \mu(J^m_l)-C(B,I,\sigma,\beta)\epsilon_m^M\geq \frac{1}{2}\epsilon_m^{\beta},
\end{equation}
provided $\epsilon$ is small enough.
  Combining \eqref{Gmul} with \eqref{Gequ7}, we see that there exist
$k^m_l\in 2J^m_l$ such that
\begin{equation}\label{PrufRau1}
    R^{2}(L_m,k_j^m)\leq C(I)\epsilon_m^{1-\beta}.
\end{equation}
We will show that $k_j^m$, $j=1,2,\cdots,N$ is $\epsilon_m-N$ separate.
By \eqref{PrufR}, one has
\begin{equation}\label{PrufRau}
   \ln R(L,k)^2- \ln R(1,k)^2=-\sum_{n=1}^L\frac{V(n)}{\sin \pi k}\sin 2\pi \theta(n,k) +O(1).
\end{equation}
By \eqref{PrufRau1} and  \eqref{PrufRau}, we have that
the assumption \eqref{Gassum1} holds for suitable $C_1$.
 Moreover, by the separation
assumption of scale $m$, $|k_i^m-k_j^m|>\epsilon_m^{N^{-2}}$, which implies \eqref{Gassum2}.
 Now the Lemma follows from Theorem \ref{Thmbound}.
 \end{proof}
\begin{proof}[\bf Proof of Theorem \ref{Thm1:absencesc}]
Define the set $S_m$ as a union of
all singular intervals $J$ at scale $m$. By Lemma \ref{Leinse}, it is easy to see that the set $S_m$ can be covered
by at most $8N$ intervals of size  $\epsilon_m^{N^{-2}}$. We denote them by $\tilde{J}_l^m$.  By the smallness choice of $\epsilon$, we have for any $m\in \Z^+$,
\begin{equation*}
    \mu_{sc}(S_m)\leq 8 N\frac{1}{32}\delta N^{-3}=\frac{1}{4}N^{-2}\delta.
\end{equation*}
It  yields that
\begin{equation}\label{Gequ9}
   \sum_{m=1}^{N^2} \mu_{sc}(S_m)\leq \frac{1}{4} \delta.
\end{equation}
 Denote by $\tilde{m}=\lfloor mN^{-2}\rfloor$.
Then any interval $\tilde{J}^m_l$ satisfying
 $\mu_{sc}(\tilde{J}_l^m)\geq\epsilon_{\tilde{m}}^{\beta}$ already belongs to $S_{\tilde{m}}$ since $|\tilde{J}_l^m|=\epsilon_m^{N^{-2}}\leq \epsilon_{\tilde{m}}$. Therefore, for any $m\geq N^2$, we have
 \begin{equation}\label{Gequ10}
    \mu_{sc}(S_m\backslash \bigcup_{l<m}S_l)\leq 8N\epsilon_{\tilde{m}}^{\beta}.
 \end{equation}
 By \eqref{Gequ10} and the fact that  each $\tilde{m}$ has at most $N^2$ corresponding $m$, we have
 \begin{equation}\label{Gequ11}
   \sum _{m=N^2}^{\infty}\mu_{sc}(S_m\backslash \bigcup_{l<m}S_l)\leq \sum_{\tilde{m}=1}^{\infty}8N^3\epsilon_{\tilde{m}}^{\beta}\leq 16N^3\epsilon^{\beta}.
 \end{equation}
 By \eqref{Gequ9} and \eqref{Gequ11}, we finally obtain
 \begin{eqnarray}
    \mu_{sc}(\cup_m S_m)
    &\leq& \frac{\delta}{4} +16N^3\epsilon^{\beta}\leq \frac{\delta}{2},\label{Gequ8}
 \end{eqnarray}
if  $\epsilon$ is small enough  ($\epsilon^{\beta}\leq \frac{\delta}{64N^3}$).
On the other hand, by Lemma \ref{Lezero}, the spectral measure can
only be zero-dimensional.
Thus,  $\mu_{sc}$ is supported on a set $S$ such that for any $E \in S$ and any $\alpha>0$ (see \cite[Corollary 2.2 ]{RJLS96} for example),
\begin{equation*}
    D^{\alpha}\mu_{sc}(E)=\limsup_{\varepsilon\to 0}\frac{\mu_{sc}(E-\varepsilon,E+\varepsilon)}{2^{\alpha}\varepsilon^{\alpha}}=
    \infty.
\end{equation*}
In particular, $S\subset \cup S_m$.
It implies
\begin{equation*}
    \delta=\mu_{sc}(S)=\mu_{sc}(\cup S_m)\leq\frac{1}{2}\delta.
\end{equation*}
This is impossible.

\end{proof}
 \section*{Acknowledgments}
    W.L. was supported by   NSF DMS-1700314. This research was also supported by  NSF DMS-1401204.

\footnotesize

\end{document}